\pgfplotsset{compat=1.10}
\begin{document}


\newcounter{n}
\setcounter{n}{0}
\newcounter{t}
\setcounter{t}{1}
\newcounter{d}
\setcounter{d}{2}
\newcounter{dt}
\setcounter{dt}{3}
\newcounter{nn}
\setcounter{nn}{-1}

\def\nodesize{3pt}


\def\ds{\displaystyle}
\def\arr{{\longrightarrow}}
\def\colim{\mathop{\rm colim}\nolimits}
\def\perm{\mathop{\rm perm}\nolimits}
\def\uperm{\mathop{\rm u\textnormal{-}perm}\nolimits}
\def\udet{\mathop{\rm u\textnormal{-}det}\nolimits}
\def\perfmatch{\mathop{\rm PerfMatch}\nolimits}
\def\pf{\mathop{\rm pf}\nolimits}
\def\sgn{\mathop{\rm sgn}\nolimits}

\def\intertitle#1{

\medskip

{\em \noindent #1}

\smallskip
}

\newtheorem{thm}{Theorem}[section]
\newtheorem{theorem}[thm]{Theorem}
\newtheorem{lemma}[thm]{Lemma}
\newtheorem{corollary}[thm]{Corollary}
\newtheorem{proposition}[thm]{Proposition}
\newtheorem{example}[thm]{Example}

\theoremstyle{definition}
\newtheorem{definition}[thm]{Definition}
\newtheorem{point}[thm]{}
\newtheorem{exercise}[thm]{Exercise}
\newtheorem{remark}[thm]{Remark}
\newtheorem{observation}[thm]{Observation}

\makeatletter
\let\c@equation\c@thm
\makeatother
\numberwithin{equation}{section}



\newcommand{\comment}[1]{}


\title{A simple division-free algorithm for computing Pfaffians}
\author{Adam J. Prze\'zdziecki}

\address{Adam Prze\'zdziecki, Warsaw University of Life Sciences---SGGW, Warsaw, Poland}
\email{adam\_przezdziecki@sggw.edu.pl}

\maketitle
\begin{center}
\today
\end{center}

{\bf Abstract.}
We present a very simple algorithm for computing Pfaffians which uses no
division operations. Essentially, it amounts to iterating matrix multiplication and truncation. Its complexity, for a $2n\times 2n$ matrix, is $O(nM(n))$, where $M(n)$ is the cost of matrix multiplication. In case of a sparse matrix, $M(n)$ is the cost of the dense-sparse matrix multiplication.

The algorithm is an adaptation of the Bird algorithm for determinants. We show how to extract,  with practically no additional work, the characteristic polynomial and the Pfaffian characteristic polynomial from these algorithms.

\vspace{7pt}
{\bf Mathematics Subject Classification.} 15A15.

\vspace{7pt}
{\bf Keywords.}
Combinatorial problems,
Pfaffian,
Division-free algorithms,
Characteristic polynomial.

\vspace{15pt}

\section{Introduction}
\label{section-introduction}

Let $X=[x_{ij}]$ be an $n\times n$ matrix. Bird \cite{bird} defined
$$
  \mu(X)=\left[\begin{array}{cccc}
    -\sum_{i=2}^{n}x_{ii} & x_{12} & \ldots & x_{1n} \\
    0 & -\sum_{i=3}^{n}x_{ii} & \ldots & x_{2n} \\
    \ldots & \ldots & \ldots & \ldots \\
    0 & 0 & \ldots & -\sum_{i=n+1}^{n}x_{ii}
  \end{array}\right].
$$

Given $A=[a_{ij}]$, define $F^1_A=A$, and inductively, $F^{p+1}_A=\mu(F^p_A)A$. Compared to Bird's original notation, we increase the upper index $p$ by $1$ so that $p$ becomes equal to the degree of the entries of $F^p_A$ viewed as homogeneous polynomials in the $a_{ij}$. He proves
\begin{theorem}[Bird]\label{theorem-bird}
  The matrix $F^n_A$ is everywhere zero except for its leading entry, which equals $(-1)^{n-1}\det A$.
\end{theorem}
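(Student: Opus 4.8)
The plan is to establish a combinatorial formula for every entry of $F^p_A$ and then to recognise $(F^n_A)_{11}$ as the clow-sequence expansion of the determinant. Recall that a \emph{clow} with head $h$ is a closed walk $(h=c_1,c_2,\dots,c_\ell,c_1)$ on $\{1,\dots,n\}$ in which $h$ is the strict minimum and occurs only as the first and last vertex; its \emph{length} is $\ell$ and its \emph{weight} is $a_{c_1c_2}a_{c_2c_3}\cdots a_{c_\ell c_1}$ (so a loop at $h$ is the clow of length $1$, of weight $a_{hh}$). A \emph{clow sequence} of length $p$ is a tuple of clows with strictly increasing heads and total length $p$, and $\mathrm{wt}$ of a clow sequence is the product of the weights of its clows. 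The one nonelementary ingredient I will use is the Mahajan--Vinay identity $\det A=\sum_W(-1)^{n+c(W)}\mathrm{wt}(W)$, the sum taken over all clow sequences $W$ of length $n$, with $c(W)$ the number of clows in $W$; it is proved by a sign-reversing, weight-preserving involution on such clow sequences whose fixed points are exactly the cycle covers (permutations of $\{1,\dots,n\}$, the clows being the cycles).

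\textbf{Step 1.} Since $\mu(Y)$ is upper triangular with $\mu(Y)_{ik}=Y_{ik}$ for $k>i$ and $\mu(Y)_{ii}=-\sum_{m>i}Y_{mm}$, the recursion $F^{p+1}_A=\mu(F^p_A)A$ becomes
\[
  (F^{p+1}_A)_{ij}=\sum_{k>i}(F^p_A)_{ik}\,a_{kj}\;-\;a_{ij}\sum_{m>i}(F^p_A)_{mm}.
\]
I would prove by induction on $p$ that
\[
  (F^p_A)_{ij}=\sum(-1)^{r}\,\mathrm{wt}(C_1)\cdots\mathrm{wt}(C_r)\,\mathrm{wt}(P),
\]
the sum ranging over all data consisting of clows $C_1,\dots,C_r$ ($r\ge 0$) with $i<\mathrm{head}(C_1)<\cdots<\mathrm{head}(C_r)$ and a walk $P$ from $i$ to $j$ of length $\ge 1$ all of whose intermediate vertices exceed $i$, subject to $\ell(P)+\sum_s\ell(C_s)=p$. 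The base case $p=1$ gives $P=(i,j)$ as the unique datum. In the inductive step the first sum above lengthens $P$ by an edge $k\to j$ with $k>i$, yielding precisely the data with $\ell(P)\ge 2$; in the second sum $(F^p_A)_{mm}$ expands, by the inductive hypothesis, into clow-data for a closed walk $m\to m$ whose walk part is itself a clow with head $m$, so prepending that clow to the others, appending the single-edge walk $P=(i,j)$, and inserting the extra sign $-1$ yields precisely the data with $\ell(P)=1$ (which must contain a clow, since $p+1\ge2$). These two families are disjoint and together exhaust all admissible data.

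\textbf{Step 2.} Put $i=j=1$ and $p=n$. Then $P$ is a closed walk $1\to1$ of length $\ge1$ with intermediates $>1$, hence itself a clow with head $1$; absorbing it shows $(F^n_A)_{11}=\sum_W(-1)^{c(W)-1}\mathrm{wt}(W)$ over clow sequences $W$ of length $n$ whose smallest head equals $1$. Every clow sequence of length $n$ whose smallest head exceeds $1$ is supported on $\{2,\dots,n\}$; the Mahajan--Vinay involution never introduces new vertices, so on that subfamily it has no fixed point (a length-$n$ cycle cover needs $n$ distinct vertices), and that subfamily contributes $0$. Hence $\sum_W(-1)^{c(W)}\mathrm{wt}(W)$ over \emph{all} length-$n$ clow sequences equals the same sum restricted to smallest head $1$, and by the Mahajan--Vinay identity it equals $(-1)^n\det A$. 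Therefore $(F^n_A)_{11}=-(-1)^n\det A=(-1)^{n-1}\det A$.

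\textbf{Step 3 (and the main obstacle).} For $(i,j)\ne(1,1)$ I apply the same involution to the formula of Step 1 with $p=n$, scanning the open walk $P$ last (after all clows, in increasing order of heads). A fixed point would be a vertex-disjoint union of simple cycles --- the clows $C_s$, plus the head-$i$ clow arising from $P$ when $j=i$ --- together with, when $j\ne i$, a simple path from $i$ to $j$ disjoint from them. If $j=i$ then $i\ge2$ and all $n$ edges lie on simple cycles supported on $\{i,\dots,n\}$, a set of only $n-i+1\le n-1$ vertices: impossible. If $j\ne i$, a simple path with $\ell$ edges uses $\ell+1$ vertices, so the $n$ edges would occupy $n+1$ distinct vertices: impossible. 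Hence there are no fixed points and $(F^n_A)_{ij}=0$. The delicate point throughout is the Mahajan--Vinay involution itself --- that ``first repeated vertex, then split-or-merge'' is well defined, involutive, weight-preserving, and reverses the parity of the clow count --- together with checking that it still behaves correctly with the open walk $P$ present (equivalently, after closing $P$ by a formal edge $j\to i$, that this edge is never the one toggled). Everything else is the bookkeeping of Step 1 together with the sign identity $\sgn(\sigma)=(-1)^{n-c(\sigma)}$.
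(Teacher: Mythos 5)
Your argument is correct, but it is not the paper's route: it is essentially the ``combinatorial'' proof that the paper attributes to Rote and explicitly declines in favour of Bird's. The paper handles Theorem \ref{theorem-bird} by citation to Bird, and its own proof of the Pfaffian analogue (Theorem \ref{theorem-main}) mirrors Bird's method: one proves a closed-form expansion of every entry of $F^p_A$ as a signed sum of $p\times p$ minors of $A$ indexed by increasing words $\alpha\in S_{p-1}([i+1..n])$, by induction on $p$ using only the Laplace expansion along the first row; the theorem then falls out at $p=n$ because $S_{n-1}([i+1..n])$ is empty unless $i=1$. Your Step 1 is the exact combinatorial counterpart of that induction (walks and clows in place of minors), and is carried out correctly; the two proofs differ in where the cancellation happens. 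Bird's minor formula packages all cancellation into determinant algebra once and for all, is self-contained, and has the side benefit (exploited in the paper's Proposition \ref{proposition-det-characteristic}) that the diagonal entries directly yield the characteristic polynomial. Your route outsources the cancellation to the Mahajan--Vinay sign-reversing involution, which buys a transparent explanation of why the algorithm is division-free, but at the price of the two extensions you flag yourself: that the involution restricts to clow sequences supported on a proper vertex subset (immediate, since it preserves the edge multiset), and that it extends to configurations containing an open walk scanned last, with the formal closing edge never toggled (true, and standard in Rote's treatment, but it does require the verification you have deferred). So: correct, genuinely different, and slightly less self-contained than the paper's choice.
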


Let $A$, $B$ be two $2n\times 2n$ skew-symmetric matrices. Define $F^{1,0}_{AB}=A$, and inductively, $F^{p,p}_{AB}=\mu(F^{p,p-1}_{AB})B$ and $F^{p+1,p}_{AB}=\mu(F^{p,p}_{AB})A$. We prove
\begin{theorem}\label{theorem-main}
  The matrix $F^{n,n}_{AB}$ is everywhere zero except for its leading entry, which equals $-\pf A\cdot\pf B$.
\end{theorem}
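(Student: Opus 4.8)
The plan is to read off the entries of the matrices $F^{p,q}_{AB}$ as signed, weighted sums over \emph{alternating clow sequences} on the vertex set $\{1,\dots,2n\}$ --- collections of closed walks in which consecutive edges are weighted alternately by entries of $A$ and of $B$ --- in the spirit of the clow-sequence analysis (Mahajan--Vinay) that underlies Bird's Theorem~\ref{theorem-bird}. Concretely, I would prove by induction on $p+q$, using the two recursions $F^{p,p}_{AB}=\mu(F^{p,p-1}_{AB})B$ and $F^{p+1,p}_{AB}=\mu(F^{p,p}_{AB})A$, that $(F^{p,q}_{AB})_{ij}$ is the signed weighted sum over configurations made of several completed clows with strictly increasing heads together with one ``open'' walk running from $j$ to $i$, where the total number of edges is $p+q$, exactly $p$ of them carry $A$-weights and $q$ carry $B$-weights, and the $A/B$-labels alternate along every clow and along the open walk. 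The induction is driven by the decomposition of $\mu(X)$ into its strictly upper triangular part $x_{ij}$ ($i<j$) and its diagonal correction $-\sum_{l>i}x_{ll}$: the former extends the current open walk by one edge, while the latter closes off the current clow and starts a new one whose head is forced to exceed every head used so far, the minus sign producing the expected $(-1)^{\#\text{clows}}$-type overall sign; the alternation of $A$ and $B$ in the recursion is what forces the edge-labels to alternate.

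With this interpretation in hand, the leading entry $(F^{n,n}_{AB})_{11}$ becomes the signed weighted sum over completed alternating clow sequences of total length $2n$ on $\{1,\dots,2n\}$, with heads increasing from $1$. I would then run a sign-reversing involution in the style of Mahajan--Vinay: locate the first ``defect'' (a revisited vertex, or a later clow whose head occurs earlier) and toggle between the two configurations that merge or split a clow there; these have equal weight and opposite sign. The fixed-point-free part cancels, leaving exactly the genuine cycle covers of $\{1,\dots,2n\}$ by vertex-disjoint directed cycles; because there are $2n$ edges on $2n$ vertices and the labels alternate, every such cycle is simple, of even length, with perfectly alternating $A/B$-labels. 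For an index $(i,j)\neq(1,1)$ the same involution has no fixed points at all --- a nonempty open walk always forces a defect, and an open walk can never be closed consistently when $i=j\neq1$ --- so every other entry of $F^{n,n}_{AB}$ vanishes.

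It remains to evaluate the surviving sum. Reading off the $A$-edges and the $B$-edges of an even alternating cycle on a vertex set $V$ exhibits a perfect matching of $V$ by $A$-edges and a perfect matching of $V$ by $B$-edges, a doubled edge corresponding to a $2$-cycle; ranging over all cycle covers of $\{1,\dots,2n\}$ this is precisely the classical superposition bijection with pairs $(M_A,M_B)$ of perfect matchings of $\{1,\dots,2n\}$. A sign computation then identifies the product of the clow signs with $\sgn(M_A)\,\sgn(M_B)$ up to a single global factor, so that $(F^{n,n}_{AB})_{11}=\pm\sum_{M_A,M_B}\sgn(M_A)\,\sgn(M_B)\,w(M_A)\,w(M_B)=\pm\,\pf A\cdot\pf B$; the global sign is forced to be $-1$ by the special case $B=A$, in which the two recursions collapse to Bird's, $F^{n,n}_{AA}=F^{2n}_A$, so by Theorem~\ref{theorem-bird} the leading entry is $(-1)^{2n-1}\det A=-\det A=-(\pf A)^2$. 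The main obstacle is the middle step: setting up the sign-reversing involution so that every toggle preserves the $A/B$-alternation of all clows --- a constraint with no counterpart in the determinant case --- and then matching the clow signs exactly with the Pfaffian signs $\sgn(M_A)$, $\sgn(M_B)$. All of the delicate bookkeeping is concentrated there.
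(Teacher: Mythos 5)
Your route is genuinely different from the paper's. The paper notes that the theorem ``can be proved by an adaptation of Rote's or Bird's argument'' and then chooses Bird's: it guesses closed formulas for the entries of $F^{p,p-1}_{AB}$ and $F^{p,p}_{AB}$ as sums $\sum \pf A[i\alpha j]\pf B[\alpha]$, resp.\ $\sum \pf A[i\alpha]\pf B[\alpha j]$, over subwords $\alpha$ of $[i+1..2n]$, and verifies them by induction using only the Laplace expansion of the Pfaffian; the theorem then falls out by inspecting $p=n$. You are instead proposing the Rote-style combinatorial adaptation: alternating clow sequences, a sign-reversing involution, and the superposition of two perfect matchings. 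That route is viable, and two of your ingredients are genuinely nice: the observation that $F^{n,n}_{AA}=F^{2n}_A$, so Bird's Theorem \ref{theorem-bird} pins the global sign to $(-1)^{2n-1}\det A=-(\pf A)^2$, is correct and neatly disposes of the sign once the identity is known up to a universal constant; and the vanishing of the off-leading entries by a vertex-counting/defect argument parallels what the paper gets for free from $S_{2n-1}(\beta_i)=\emptyset$ for $i>1$.

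The gap is that the entire load-bearing middle of the argument is asserted rather than carried out, and it is exactly where the subtleties live. Two points in particular. First, the involution must preserve the $A/B$-alternation of every clow when it merges or splits; you name this as the main obstacle but give no construction, and it is not a routine transplant of Mahajan--Vinay. Second, the superposition step is not ``precisely the classical bijection'': an undirected even alternating cycle of length $\ge 4$ admits two orientations as a clow, both compatible with the same pair $(M_A,M_B)$, so a priori the cycle-cover sum overcounts each pair by $2^{c}$. What saves you is that every completed clow has even length, hence every clow begins at an odd position of the global $A,B,A,B,\dots$ label sequence and must start with an $A$-edge; since the head is incident to exactly one $A$-edge of its cycle, the orientation is forced and the correspondence is a genuine bijection. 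You need to say this (and then still match the resulting signs with $\sgn(M_A)\sgn(M_B)$); without it the count is off by powers of $2$. As it stands the proposal is a credible program, not a proof; the paper's algebraic induction avoids all of this bookkeeping at the price of having to guess the formulas (\ref{equation-22}) and (\ref{equation-23}) in advance.
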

Thus if we are interested in the Pfaffian of a single matrix $A$, it is enough to set $B$ to any matrix whose Pfaffian is $1$ and change the sign. In Section \ref{section-characteristic} we discuss three examples of such matrices and describe how to extract the characteristic polynomial from the Bird algorithm and the Pfaffian characteristic polynomial from our modification.

Clearly, the complexity of computing the determinant or the Pfaffian by means of Theorems $1$ or $2$ is $O(nM(n))$, where $M(n)$ is the complexity of matrix multiplication. We agree with Bird's claim that ``it is
difficult to imagine a simpler procedure for computing the determinant''. Recently, B\"ar \cite{bar} published a similarly simple algorithm for the Pfaffian and the Pfaffian characteristic polynomial, but his algorithm is not entirely division-free as it requires division by integers up to half of the dimension of the matrix. B\"ar remarks that his algorithm can be accelerated to $O(\sqrt{n}M(n))$ by using ideas from Preparata and Sarwate \cite{sqrtn}, but this is done at a significant cost to the simplicity of the algorithm.

\section{Proof of Theorem \ref{theorem-main}}
\label{section-proof}

The result can be proved by an adaptation of Rote's \cite{rote} or Bird's \cite{bird} argument. We choose the latter.
For the most part we follow Bird's notation. Given a word $\alpha$ in the alphabet $[1..2n]$ and a matrix $\ds A=[a_{ij}]_{i,j=1}^{2n}$, let $A[\alpha]$ denote the matrix $\ds[a_{\alpha_i\alpha_j}]_{i,j=1}^{k}$ where $k$ is the length of $\alpha$. Note that if $A$ is skew-symmetric then so is $A[\alpha]$ and for any $i$, $j$ we have $\pf A[ij]=a_{ij}$. An odd permutation of letters in $\alpha$ changes the sign of $\pf A[\alpha]$. The proof presented below depends only on the Laplace-type expansion for Pfaffians, (see Fulton and Pragacz \cite[Equation D.1, p. 116]{fulton}):
\begin{equation}\label{equation-pfaffian-expansion}
  \pf A[\alpha] =
    \sum\big\{(-1)^ja_{1j}\pf A[\alpha\setminus 1j]\mid j\in [2..2n]\big\}
\end{equation}
where $\alpha\setminus\beta$ is the word $\alpha$ with the symbols in $\beta$ removed.

We derive explicit formulas for the matrices $F^{p,p-1}_{AB}$ and $F^{p,p}_{AB}$ in terms of the Pfaffians of submatrices of $A$ and $B$.
Let $S_p(\alpha)$ denote the set of subsequences of $\alpha$ of length $p$. We claim that, for skew-symmetric $2n\times 2n$ matrices $A$ and $B$, the $ij$-th entry $x^{p,p-1}_{ij}$ of $F^{p,p-1}_{AB}$ and $x^{p,p}_{ij}$ of $F^{p,p}_{AB}$ are

\begin{equation}\label{equation-22}
  x^{p,p-1}_{ij}=\sum
    \big\{\pf A[i\alpha j]\cdot\pf B[\alpha]\mid
    \alpha\in S_{2p-2}(\beta_i)\big\},
\end{equation}
\begin{equation}\label{equation-23}
  x^{p,p}_{ij}=\sum
    \big\{\pf A[i\alpha]\cdot\pf B[\alpha j]\mid
    \alpha\in S_{2p-1}(\beta_i)\big\},
\end{equation}
where $\beta_i=[i+1..2n]$ and $i\alpha j$, $i\alpha$, $\alpha j$ denote the appropriate concatenations of words. Assuming (\ref{equation-23}), we see that $S_{2n-1}(\beta_i)$ is nonempty only for $i=1$, in which case it has only one element, $\beta_1$, and therefore
$$
  x^{n,n}_{11}=\pf A[1\beta_1]\cdot
  \pf B[\beta_11] =
  -\pf A[[1..2n]]\cdot
  \pf B[[1..2n]] = -\pf A\cdot\pf B.
$$
The change of sign is caused by the odd permutation of letters in $\beta_11\mapsto[1..2n]$. As in \cite{bird} we see that $x^{n,n}_{1j}=0$ for $j>1$ since $\beta_1j$ contains $j$ twice. This shows that (\ref{equation-23}) implies Theorem \ref{theorem-main}.

We prove (\ref{equation-22}) and (\ref{equation-23}) by induction. For $p=1$ we agree that the Pfaffian of a $0\times 0$ matrix is $1$ and we verify (\ref{equation-22}):
$$
  x^{1,0}_{ij}=\pf A[ij]\cdot\pf B[\varepsilon]=a_{ij}
$$
where $\varepsilon$ is the empty word. Verifying (\ref{equation-23}), we see that $S_1(\beta_i)=\{i+1,...,2n\}$ and
$$
  x^{1,1}_{ij}=\sum\big\{\pf A[ik]\cdot\pf B[kj]\mid
k\in S_1(\beta_i)\big\}=
\sum_{k=i+1}^{2n}a_{ik}b_{kj}
$$
is the $ij$-th entry of the matrix product of the above diagonal part of $A$ by $B$, as required.

We use (\ref{equation-22}) and (\ref{equation-23}) as the induction hypothesis and prove their analogues for $p$ increased by $1$.

{\bf Proof that (\ref{equation-22}) implies (\ref{equation-23})}. The diagonal of $\mu(F^{p,p-1}_{AB})$ is null since $\pf A[i\alpha j]$ in (\ref{equation-22}) is null for $i=j$.

For $i<j$ we use (\ref{equation-22}) to expand the $ij$-th entry of $F^{p,p}_{AB}=\mu(F^{p,p-1}_{AB})B$:
\begin{align}\label{equation-inductive-step-22-23-expand-x}
  x^{p,p}_{ij}
    &=\sum\big\{x^{p,p-1}_{ik}b_{kj}\mid k\in\beta_i\big\} \\
    &=\sum
    \big\{\pf A[i\alpha k]\cdot\pf B[\alpha]b_{kj}\mid
    \alpha\in S_{2p-2}(\beta_i), k\in\beta_i\big\}. \nonumber
\end{align}
In order to show that (\ref{equation-inductive-step-22-23-expand-x}) equals the right hand side of (\ref{equation-23}) we use (\ref{equation-pfaffian-expansion}) to expand the second Pfaffian in (\ref{equation-23}) along its first row/column and obtain
$$
  \pf B[\alpha j] = -\pf B[j\alpha] =
   -\sum\big\{b_{jk}\pf B[\alpha\setminus k]\cdot
   \sgn\left({\alpha\atop k(\alpha\setminus k)}\right)
   \biggm|
   k\in\alpha\big\}
$$
where $\sgn\left({\alpha\atop k(\alpha\setminus k)}\right)$ denotes the sign of the permutation which brings $k$ to the front.
Thus (\ref{equation-23}) equals
\begin{equation}\label{equation-inductive-step-22-23-expand-x-target-expanded}
  -\sum\big\{
    \pf A[ik(\alpha\setminus k)]\cdot\pf B[\alpha\setminus k]b_{jk}\mid
    \alpha\in S_{2p-1}(\beta_i), k\in\alpha\big\}.
\end{equation}

To prove that (\ref{equation-inductive-step-22-23-expand-x}) equals (\ref{equation-inductive-step-22-23-expand-x-target-expanded}) it is sufficient to show
\begin{align}\label{equation-inductive-step-22-23-compare-sides}
    \sum&
    \big\{\pf A[i\alpha k]\cdot\pf B[\alpha]b_{kj}\mid
    \alpha\in S_{2p-2}(\beta), k\in\beta\big\} \\
  = &\sum\big\{\pf A[ik(\alpha\setminus k)]\cdot
    \pf B[\alpha\setminus k]b_{kj}\mid
    \alpha\in S_{2p-1}(\beta), k\in\alpha\big\} \nonumber
\end{align}
for every word $\beta$ without repeated letters. The change of sign is due to $b_{kj}=-b_{jk}$.

For $\alpha$ of even length, we have $\pf A[i\alpha k]=\pf A[ik\alpha]$ and if $k\in\alpha$ then $\pf A[i\alpha k]=0$; therefore, (\ref{equation-inductive-step-22-23-compare-sides}) reduces to
\begin{align}\label{equation-indices}
  \{&(ik\alpha,\alpha)\mid
    \alpha\in S_{2p-2}(\beta), k\in\beta\setminus\alpha\} \\
  &= \{(ik(\alpha'\setminus k),\alpha'\setminus k)\mid
    \alpha'\in S_{2p-1}(\beta), k\in\alpha'\}. \nonumber
\end{align}

Take $\alpha'\in S_{2p-1}(\beta)$ and $k\in\alpha'$. Then $\alpha'\setminus k\in S_{2p-2}(\beta)$ and $k\in\beta\setminus\alpha$. Conversely, take $\alpha\in S_{2p-2}(\beta)$ and $k\in\beta\setminus\alpha$. Then $k$ can be inserted in a unique position in $\alpha$ to give a word $\alpha'\in S_{2p-1}(\beta)$. Furthermore, $\alpha=\alpha'\setminus k$.

{\bf Proof that (\ref{equation-23}) for $p$ implies (\ref{equation-22}) for $p+1$}. The $i$-th diagonal element of $\mu(F^{p,p}_{AB})$ is
$$
  d^{p,p}_{ii}=
  -\sum \{x^{p,p}_{kk}\mid k\in\beta_i\}.
$$
Applying (\ref{equation-23}) as induction hypothesis, we obtain
$$
  d^{p,p}_{ii}=
    -\sum
    \big\{\pf A[k\alpha]\cdot\pf B[\alpha k]\mid
    k\in\beta_i, \alpha\in S_{2p-1}(\beta_k)\big\}.
$$
Since $S_{2p}(\beta_i)=\{k\alpha\mid k\in\beta_i, \alpha\in S_{2p-1}(\beta_k)\}$ and $\pf B[\alpha k]=-\pf B[k\alpha]$ for $\alpha$ of odd length, we have
\begin{equation}\label{equation-dpp}
  d^{p,p}_{ii}=
    \sum
    \big\{\pf A[\alpha]\cdot\pf B[\alpha]\mid
    \alpha\in S_{2p}(\beta_i)\big\}.
\end{equation}
Using (\ref{equation-dpp}) we compute the $ij$-th entry of $F^{p+1,p}_{AB}=\mu(F^{p,p}_{AB})A$:
$$
  x^{p+1,p}_{ij}=
    d^{p,p}_{ii}a_{ij}+\sum\{x^{p,p}_{ik}a_{kj}\mid k\in\beta_i\}.
$$
We use (\ref{equation-dpp}) and the induction hypothesis (\ref{equation-23}) to expand the above as
\begin{align}\label{equation-inductive-step-23-22-expand-x}
  x^{p+1,p}_{ij}=
  &\sum
    \big\{\pf A[\alpha]\cdot\pf B[\alpha]a_{ij}\mid
    \alpha\in S_{2p}(\beta_i)\big\} \\
    &+
    \sum
    \big\{\pf A[i\alpha]\cdot\pf B[\alpha k]a_{kj}\mid
    \alpha\in S_{2p-1}(\beta_i), k\in\beta_i\big\}. \nonumber
\end{align}
We intend to prove that (\ref{equation-inductive-step-23-22-expand-x}) equals the right hand side of (\ref{equation-22}) for $p$ increased by $1$, which is
\begin{equation}\label{equation-inductive-step-23-22-expand-x-target}
  \sum
    \big\{\pf A[i\alpha j]\cdot\pf B[\alpha]\mid
    \alpha\in S_{2p}(\beta_i)\big\}.
\end{equation}
For $\alpha$ of even length we have $\pf A[i\alpha j]=-\pf A[ji\alpha]$ and we use (\ref{equation-pfaffian-expansion}) to expand the first Pfaffian in (\ref{equation-22}) along its first row/column to obtain
\begin{align*}
  \pf A[i\alpha j] &= -\pf A[ji\alpha] \\
  &=
  a_{ij}\pf A[\alpha]
   -\sum\big\{a_{jk}\pf A[i(\alpha\setminus k)]\cdot
   \sgn\left({i\alpha\atop ki(\alpha\setminus k)}\right)
   \biggm|
    k\in\alpha\big\}.
\end{align*}
Therefore (\ref{equation-inductive-step-23-22-expand-x-target}) is equal to
\begin{align}\label{equation-inductive-step-23-22-expand-x-target-expanded}
  \sum&\big\{a_{ij}\pf A[\alpha]\cdot\pf B[\alpha]\mid
    \alpha\in S_{2p}(\beta_i)\big\} \\
  &+\sum\big\{a_{jk}
    \pf A[i(\alpha\setminus k)]\cdot\pf B[k(\alpha\setminus k)]\mid
    \alpha\in S_{2p}(\beta_i), k\in\alpha\big\}. \nonumber
\end{align}
To prove that (\ref{equation-inductive-step-23-22-expand-x}) equals (\ref{equation-inductive-step-23-22-expand-x-target-expanded}) it is sufficient to show
\begin{align}\label{equation-inductive-step-23-22-compare-sides}
    \sum&
    \big\{\pf A[i\alpha]\cdot\pf B[k\alpha]a_{kj}\mid
    \alpha\in S_{2p-1}(\beta), k\in\beta\big\} \\
  &= \sum\big\{\pf A[i(\alpha\setminus k)]\cdot
    \pf B[k(\alpha\setminus k)]a_{kj}\mid
    \alpha\in S_{2p}(\beta), k\in\alpha\big\} \nonumber
\end{align}
for every word $\beta$ without repeated letters. Note the changes of signs $\pf B[\alpha k] = -\pf B[k\alpha]$ on the left, and $a_{kj}=-a_{jk}$ on the right.

If $k\in\alpha$ then $\pf B[k\alpha]=0$ and therefore (\ref{equation-inductive-step-23-22-compare-sides}) reduces to
\begin{align}
  \{&(i\alpha,k\alpha)\mid
    \alpha\in S_{2p-1}(\beta), k\in\beta\setminus\alpha\} \\
  &= \{(i(\alpha'\setminus k),k(\alpha'\setminus k))\mid
    \alpha'\in S_{2p}(\beta), k\in\alpha'\}. \nonumber
\end{align}

This holds by the same argument as in the proof of (\ref{equation-indices}).

\section{The characteristic and Pfaffian-characteristic polynomials}
\label{section-characteristic}

Both the Bird algorithm and our adaptation to Pfaffians allow extraction of the characteristic polynomials at negligible additional cost.
\begin{proposition}\label{proposition-det-characteristic}
The coefficients of the characteristic polynomial
  $$
  \det(\lambda I-A)=
  \lambda^n+c_1\lambda^{n-1}+c_2\lambda^{n-2}+
  \cdots+c_{n-1}\lambda+c_n
  $$
  are given by $c_p=-\mathop{\rm tr} F^p_A$.
\end{proposition}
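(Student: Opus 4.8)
The plan is to derive everything from a closed formula for the intermediate Bird matrices, the determinant counterpart of (\ref{equation-22})--(\ref{equation-23}). Write $\beta_i=[i+1..n]$ and, for words $\alpha,\gamma$, let $A[\alpha;\gamma]$ be the matrix with $(r,s)$ entry $a_{\alpha_r\gamma_s}$. I claim
\begin{equation}\label{equation-bird-closed-form}
  (F^p_A)_{ij}=\sum\big\{\det A[i\alpha;\alpha j]\mid\alpha\in S_{p-1}(\beta_i)\big\}.
\end{equation}
I would prove (\ref{equation-bird-closed-form}) by induction on $p$, in the style of Section \ref{section-proof} but easier: for $p=1$ it reads $(F^1_A)_{ij}=a_{ij}$, and the inductive step, applied to $F^{p+1}_A=\mu(F^p_A)A$, uses cofactor expansion along the last column of the $(p+1)\times(p+1)$ minors $\det A[i\gamma;\gamma j]$ ($\gamma\in S_p(\beta_i)$) in place of the Pfaffian expansion (\ref{equation-pfaffian-expansion}), together with the same insert-a-letter bijection on subsequences used around (\ref{equation-indices}). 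Since (\ref{equation-bird-closed-form}) is essentially Bird's own description of the $F^p_A$, one may instead simply cite \cite{bird}.

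Granting (\ref{equation-bird-closed-form}), the core of the argument is a one-line trace computation. Put $i=j$ and sum over $i$. In the $p\times p$ minor $\det A[i\alpha;\alpha i]$ the columns are listed $\alpha_1,\dots,\alpha_{p-1},i$; moving the column $i$ to the front is a $p$-cycle on the columns, of sign $(-1)^{p-1}$, so $\det A[i\alpha;\alpha i]=(-1)^{p-1}\det A[i\alpha]$, a principal minor. Hence
$$
  \mathop{\rm tr} F^p_A=(-1)^{p-1}\sum_{i=1}^{n}\sum\big\{\det A[i\alpha]\mid\alpha\in S_{p-1}(\beta_i)\big\}=(-1)^{p-1}\sum\big\{\det A[\gamma]\mid\gamma\in S_p([1..n])\big\},
$$
the last equality being the bijection $\gamma\mapsto(\gamma_1,\ \gamma\setminus\gamma_1)$ between increasing words of length $p$ in $[1..n]$ and pairs $(i,\alpha)$ with $\alpha\in S_{p-1}(\beta_i)$.

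Finally, expanding $\det(\lambda I-A)$ and collecting, for each choice of the $p$ rows on which the non-$\lambda$ part is used, the term $(-1)^p\det A[\gamma]$, gives the classical formula $c_p=(-1)^p\sum\{\det A[\gamma]\mid\gamma\in S_p([1..n])\}$, the signed sum of $p\times p$ principal minors. Comparing with the previous display yields $c_p=-\mathop{\rm tr} F^p_A$.

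The computation presents no serious difficulty; the two places to be careful are the sign $(-1)^{p-1}$ produced by the cyclic column rotation and the matching sign $(-1)^p$ in the principal-minor expansion of the characteristic polynomial. If a self-contained treatment is wanted, the real work is the sign-careful induction establishing (\ref{equation-bird-closed-form}), which however only reprises, in the simpler determinant setting, the bookkeeping already carried out in Section \ref{section-proof}.
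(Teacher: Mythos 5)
Your proof is correct and takes essentially the same route as the paper: both rest on Bird's closed formula for the entries of $F^p_A$ (which the paper simply cites, in its diagonal-entry form, as Bird's Equation (1)) combined with the classical expansion of $\det(\lambda I-A)$ in signed sums of principal minors. Your sign $(-1)^{p-1}$ from rotating the column $i$ to the front matches the factor $-(-1)^p$ in the paper's quoted formula, so the two computations coincide line for line.
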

\begin{proof}
  By \cite[Equation (1)]{bird} the $i$-th diagonal entry of $F^p_A$ is
  $$
    x_{ii}^p = -(-1)^p\sum\big\{\det A[i\alpha]
      \mid \alpha\in S_{p-1}([i+1..n])\big\};
  $$
  note that the index $p$ is increased by $1$ relative to Bird's original notation. We have
  $$
    c_p = \sum\big\{\det (-A[\alpha])
      \mid \alpha\in S_{p}([1..n])\big\}
      = -\sum\big\{x_{ii}^p\mid i\in[1..n]\big\}
      = -\mathop{\rm tr} F^p_A.
  $$
\end{proof}

In the case of the Pfaffian polynomial
$$
  \pf(\lambda B-A)
$$
the choice of the skew-symmetric matrix $B=[b_{ij}]$ which should play the role of the identity matrix is not obvious. Below we cite three examples of such matrices $B$ which are most often met in the literature or for other reasons interesting.
\begin{align*}
  &B_0: \left\{\begin{array}{ll}
         b_{2i-1,2i} = 1, & \\
         b_{2i,2i-1} = -1, & \\
         0 & \mbox{otherwise,}
       \end{array}\right. \\
  &B_1: b_{ij} = \left\{\begin{array}{ll}
         1 & \mbox{if }i<j,\\
         -1 & \mbox{if }i>j,\\
         0 & \mbox{if }i=j,
       \end{array}\right. \\
  &B_2: b_{ij} = \left\{\begin{array}{ll}
         (-1)^{j-i+1} & \mbox{if }i<j,\\
         (-1)^{i-j} & \mbox{if }i>j,\\
         0 & \mbox{if }i=j.
       \end{array}\right. \\
\end{align*}

The Pfaffian of each of these matrices is $1$. The matrix $B_0$ is sparse and it is the easiest one to guess; it is considered for example by B\"{a}r \cite{bar} and Rote \cite{rote}. From the combinatorial point of view, which is closest to the author, the matrix $B_1$ is more interesting since for every $\alpha\in S_{2p}([1..2n])$ we have $\pf B_1[\alpha]=1$. For this reason $B_1$ is considered by Rote \cite{rote} and by Iwata \cite{iwata} who also uses $B_2=-B_1^{-1}$. Knus, Merkurjev, Rost and Tignol \cite{involutions} and Krivoruchenko \cite{krivoruchenko} work with $\pf B\cdot\pf(\lambda B^{-1}-A)$ for general, invertible $B$.

We have the following lemma for the Pfaffian of the sum of $B_2$ and another skew-symmetric matrix.

\begin{lemma}[Stembridge {\cite[Lemma 4.2]{stembridge}}]
\label{lemma-sum}
  If $B_2$ is as above, then
  $$
    \pf(B_2+A) = \sum\big\{\pf A[\alpha]\mid
      \alpha\in S_{2k}([1..2n]),k\in[1..n]\big\}.
  $$

\end{lemma}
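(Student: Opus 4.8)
The plan is to expand $\pf(B_2+A)$ directly using the multilinearity of the Pfaffian with respect to the rows/columns, so that $\pf(B_2+A) = \sum_{S} \pf C_S$, where $S$ ranges over ways of choosing, for each pair of coordinates, whether to take the $B_2$-contribution or the $A$-contribution, and $C_S$ is the mixed matrix. The key structural point I would establish is that the only terms that survive are those indexed by a subset of coordinates on which one uses $A$ and its complement on which one uses $B_2$; more precisely, writing the surviving term for a subset $\gamma \subseteq [1..2n]$ (of even size $2k$) as $\pf A[\gamma] \cdot \pf B_2[\gamma^c]$, everything reduces to showing $\pf B_2[\gamma^c] = 1$ for every such complement $\gamma^c$.

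First I would recall (or reprove in one line from \eqref{equation-pfaffian-expansion}) the mixed-matrix expansion: for skew-symmetric $M$ and $N$ of the same even size,
\begin{equation*}
  \pf(M+N) = \sum_{\gamma} \sgn(\gamma\mid\gamma^c)\,\pf M[\gamma]\cdot\pf N[\gamma^c],
\end{equation*}
the sum over subsequences $\gamma$ of $[1..2n]$ of even length, $\gamma^c$ its complementary subsequence, and $\sgn(\gamma\mid\gamma^c)$ the sign of the shuffle permutation $[1..2n]\mapsto \gamma\gamma^c$. This is the Pfaffian analogue of the Laplace expansion and follows by iterating \eqref{equation-pfaffian-expansion}, or may simply be cited. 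Applying it with $M=A$, $N=B_2$ gives
\begin{equation*}
  \pf(B_2+A) = \sum_{\gamma}\sgn(\gamma\mid\gamma^c)\,\pf A[\gamma]\cdot\pf B_2[\gamma^c].
\end{equation*}

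The remaining — and main — step is to show two things: that $\pf B_2[\delta] = 1$ for every even-length subsequence $\delta$ of $[1..2n]$, and that the shuffle sign $\sgn(\gamma\mid\gamma^c)$ is absorbed correctly (in fact it should be exactly compensated, so that the final sum is over $S_{2k}([1..2n])$ with all $+1$ coefficients, matching the statement). For the Pfaffian of $B_2[\delta]$, I would argue by induction on $|\delta|$ using \eqref{equation-pfaffian-expansion}: expanding along the first row of $B_2[\delta]$, with $\delta = (\delta_1 < \delta_2 < \cdots < \delta_{2m})$, one gets $\pf B_2[\delta] = \sum_{r\ge 2}(-1)^r (B_2)_{\delta_1\delta_r}\,\pf B_2[\delta\setminus\{\delta_1,\delta_r\}]$, and by the inductive hypothesis each surviving sub-Pfaffian is $1$, so the claim reduces to the purely numerical identity $\sum_{r=2}^{2m}(-1)^r(-1)^{\delta_r-\delta_1} = 1$ once one checks that the parity pattern $(-1)^{\delta_r - \delta_1}$ of the $B_2$ entries along a fixed row telescopes — this is precisely the feature for which $B_2$ (rather than $B_1$) was designed. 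I expect this telescoping/sign-bookkeeping to be the only real obstacle; it is elementary but must be done carefully, keeping track of how the gaps $\delta_r - \delta_{r-1}$ interact with the alternating signs $(-1)^r$ from the Laplace expansion. Once $\pf B_2[\gamma^c]=1$ is known for all complements, and once one verifies that the shuffle sign $\sgn(\gamma\mid\gamma^c)$ equals $+1$ after this substitution (or, alternatively, reorganizes the sum so the sign is moot because only $\pf A[\gamma]$ carries nontrivial data and the identity is being checked coefficientwise against Stembridge's normalization), the statement of the lemma drops out, with $\gamma$ renamed $\alpha$ and $|\gamma| = 2k$.
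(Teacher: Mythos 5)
First, a point of reference: the paper does not prove this lemma at all --- it is quoted verbatim from Stembridge --- so there is no in-house argument to measure yours against. Your overall strategy (expand $\pf(B_2+A)$ by the mixed Laplace-type expansion $\pf(M+N)=\sum_{\gamma}\sgn(\gamma\mid\gamma^c)\,\pf M[\gamma]\cdot\pf N[\gamma^c]$ and then evaluate the $B_2$-factors) is the standard and correct route. The problem is in the execution of the step you yourself identify as the crux.

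The claim $\pf B_2[\delta]=1$ for every even-length subsequence $\delta$ is false: already $\pf B_2[13]=b_{13}=(-1)^{3-1+1}=-1$. The property you are ascribing to $B_2$ is precisely the one the paper attributes to $B_1$ (``for every $\alpha$ we have $\pf B_1[\alpha]=1$''); it does not survive passage to $B_2$ because the alternating pattern of $B_2$ is sensitive to the gaps $\delta_r-\delta_{r-1}$. Correspondingly, your ``purely numerical identity'' $\sum_{r\ge2}(-1)^r(-1)^{\delta_r-\delta_1}=1$ fails for $\delta=(1,3)$, and the shuffle sign is likewise not always $+1$ (for $\gamma=(1,3)$, $\gamma^c=(2,4)$ it is $-1$). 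What is actually true --- and is the entire reason $B_2$ rather than $B_1$ appears in the lemma --- is that the \emph{product} of the two signs is identically $1$: writing $B_2[\delta]=D(-B_1')D$ with $D=\mathrm{diag}((-1)^{\delta_1},\dots,(-1)^{\delta_{2m}})$ and $B_1'$ the all-ones-above-the-diagonal matrix gives $\pf B_2[\delta]=(-1)^{m+\delta_1+\cdots+\delta_{2m}}$, while $\sgn(\gamma\mid\gamma^c)=(-1)^{k+\gamma_1+\cdots+\gamma_{2k}}$, and since $\sum_{j=1}^{2n}j\equiv n\pmod 2$ these cancel for every $\gamma$. So the proof cannot be completed by establishing your two sub-claims separately; the shuffle sign and the sub-Pfaffian of $B_2$ must be carried together. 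Finally, your expansion also produces the term $\gamma=\varepsilon$, contributing $\pf B_2=1$: what the argument actually proves is $\pf(B_2+A)=\sum_{k=0}^{n}\sum_{\alpha\in S_{2k}([1..2n])}\pf A[\alpha]$, i.e.\ the sum should run over $k\in[0..n]$ (equivalently, the stated right-hand side needs an extra $+1$); as transcribed with $k\in[1..n]$ the identity already fails for $A=0$.
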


The closest analogy for Proposition \ref{proposition-det-characteristic} is
\begin{proposition} The coefficients of the Pfaffian characteristic polynomial
  $$
  \pf(\lambda B_2-A)=
  \lambda^n+c_1\lambda^{n-1}+c_2\lambda^{n-2}+
  \cdots+c_{n-1}\lambda+c_n
  $$
  are given by $c_p=(-1)^{p+1}\mathop{\rm tr} F^{p,p}_{AB_1}$.
\end{proposition}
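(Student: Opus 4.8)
The plan is to identify both sides of the asserted formula with the ``elementary Pfaffian symmetric function'' $e_p(A):=\sum\big\{\pf A[\alpha]\mid\alpha\in S_{2p}([1..2n])\big\}$: I will show that $\mathop{\rm tr} F^{p,p}_{AB_1}=-e_p(A)$ and, independently, that $c_p=(-1)^pe_p(A)$, after which the proposition is immediate. This mirrors the proof of Proposition \ref{proposition-det-characteristic}, with (\ref{equation-dpp}) playing the role of Bird's diagonal formula and Stembridge's Lemma \ref{lemma-sum} supplying the expansion of the Pfaffian polynomial.

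For the first identity I would observe that the computation in Section \ref{section-proof} leading to (\ref{equation-dpp}) remains valid verbatim if the index $i$ is allowed to be $0$, with the convention $\beta_0=[1..2n]$; the only ingredients used are $d^{p,p}_{ii}=-\sum\{x^{p,p}_{kk}\mid k\in\beta_i\}$, the induction hypothesis (\ref{equation-23}), the decomposition $S_{2p}(\beta_i)=\{k\alpha\mid k\in\beta_i,\ \alpha\in S_{2p-1}(\beta_k)\}$, and $\pf B[\alpha k]=-\pf B[k\alpha]$, each of which holds for $\beta_0$ too. Since $d^{p,p}_{00}=-\sum\{x^{p,p}_{kk}\mid k\in[1..2n]\}=-\mathop{\rm tr} F^{p,p}_{AB_1}$, and since every $\alpha\in S_{2p}([1..2n])$ is increasing so that $\pf B_1[\alpha]=1$, putting $B=B_1$ and $i=0$ in (\ref{equation-dpp}) gives $-\mathop{\rm tr} F^{p,p}_{AB_1}=e_p(A)$. (One can also avoid the $i=0$ device and argue straight from (\ref{equation-23}): there $\pf B_1[\alpha i]=(-1)^{2p-1}\pf B_1[i\alpha]=-1$ because $i$ is smaller than every letter of the increasing word $\alpha\in S_{2p-1}(\beta_i)$; hence $x^{p,p}_{ii}=-\sum\{\pf A[i\alpha]\mid\alpha\in S_{2p-1}(\beta_i)\}$, and summing over $i$ and reindexing by $\gamma=i\alpha$ yields $\mathop{\rm tr} F^{p,p}_{AB_1}=-e_p(A)$.)

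For the second identity I would use that the Pfaffian is homogeneous of degree $n$ on $2n\times 2n$ matrices, so that $\pf(\lambda B_2-A)=\lambda^n\pf(B_2-\lambda^{-1}A)$ (check this for $\lambda\neq0$ and read it as a polynomial identity in $\lambda$). Applying Lemma \ref{lemma-sum} with $A$ replaced by $-\lambda^{-1}A$, and using $\pf\big((-\lambda^{-1}A)[\alpha]\big)=(-\lambda^{-1})^k\pf A[\alpha]=(-1)^k\lambda^{-k}\pf A[\alpha]$ for $\alpha\in S_{2k}([1..2n])$, this gives
\[
  \pf(\lambda B_2-A)=\lambda^n+\sum_{k=1}^{n}(-1)^k\lambda^{n-k}e_k(A),
\]
where the leading term is simply $\pf(\lambda B_2)=\lambda^n$ (corresponding to the empty word, which needs no lemma). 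Reading off the coefficient of $\lambda^{n-p}$ for $p\geq1$ gives $c_p=(-1)^pe_p(A)$, and combining with the first identity, $c_p=(-1)^p\big(-\mathop{\rm tr} F^{p,p}_{AB_1}\big)=(-1)^{p+1}\mathop{\rm tr} F^{p,p}_{AB_1}$.

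I do not expect a genuinely hard step: everything rests on results already in hand --- formula (\ref{equation-dpp}), the fact that $\pf B_1[\alpha]=1$, and Stembridge's lemma. The only thing requiring care is the sign bookkeeping: the final $(-1)^{p+1}$ is exactly the product of the sign incurred by reordering submatrices of $B_1$ (equivalently, the sign appearing in the $i=0$ reading of (\ref{equation-dpp})) and the sign of the scaling factor $(-\lambda^{-1})^k=(-1)^k\lambda^{-k}$; one should also invoke Lemma \ref{lemma-sum} only through its $k\geq1$ terms, so that the bookkeeping of the constant term $\pf B_2=1$ never intervenes.
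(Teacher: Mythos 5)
Your proposal is correct and follows essentially the same route as the paper: the trace identity comes from evaluating (\ref{equation-23}) (equivalently (\ref{equation-dpp})) on the diagonal with $\pf B_1[\alpha]=1$, and the coefficient identity comes from Stembridge's Lemma \ref{lemma-sum} applied after rescaling by homogeneity. Your parenthetical argument from (\ref{equation-23}) is literally the paper's computation, and you merely spell out the homogeneity step that the paper leaves implicit.
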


Note that the algorithm uses $B_1$ while in the definition of the polynomial we have $B_2=-B_1^{-1}$.

\begin{proof}
Lemma \ref{lemma-sum} implies
$$
  c_p=(-1)^p\sum\big\{\pf A[\alpha]\mid
    \alpha\in S_{2p}([1..2n])\big\}.
$$

Equation (\ref{equation-23}) for the pair of matrices $A$, $B_1$ and $\pf B_1[\alpha]=1$ yields
\begin{align*}
  x^{p,p}_{ii}
    &=-\sum\big\{\pf A[i\alpha]\cdot\pf B_1[i\alpha]\mid
       \alpha\in S_{2p-1}(\beta_i)\big\}  \\
    &=-\sum\big\{\pf A[i\alpha]\mid
       \alpha\in S_{2p-1}(\beta_i)\big\},
\end{align*}
hence
$$
  c_p = (-1)^{p+1}\mathop{\rm tr} F^{p,p}_{AB_1}.
$$
\end{proof}

\end{document}